\newcommand{\va}{\mbox{${\bf a}$}}
\newcommand{\vxt}{\mbox{${\bf \tilde x}$}}
\newcommand{\vh}{\mbox{${\bf h}$}}
\newcommand{\ve}{\mbox{${\bf e}$}}
\newcommand{\vw}{\mbox{${\bf w}$}}
\newcommand{\vy}{\mbox{${\bf y}$}}
\newcommand{\vz}{\mbox{${\bf z}$}}
\newcommand{\mE}{\hbox{{\bf E}}}
\newcommand{\mH}{\hbox{{\bf H}}}
\newcommand{\mHt}{\mbox{${\bf \tilde H}$}}
\newcommand{\mI}{\hbox{{\bf I}}}
\newcommand{\mQ}{\hbox{{\bf Q}}}
\newcommand{\mR}{\mbox{{$\bf R$}}}
\newcommand{\mS}{\mbox{{$\bf S$}}}
\newcommand{\mW}{\hbox{{\bf W}}}
\newcommand{\gd}{\delta}
\newcommand{\gre}{\varepsilon}
\newcommand{\gth}{\theta}
\newcommand{\gl}{\lambda}
\newcommand{\gm}{\mu}
\newcommand{\gs}{\sigma}
\newtheorem{theorem}{Theorem}
\newtheorem{lemma}[theorem]{Lemma}
\newtheorem{prop}{Proposition}
\newtheorem{claim}[theorem]{Claim}
\newtheorem{definition}{Definition}
\theoremstyle{remark}
\newtheorem{question}{Question}
\newtheorem{coro}{Corollary}
\newcommand{\beq}{\begin{equation}}
\newcommand{\eeq}{\end{equation}}
\newcommand{\bea}{\begin{array}}
\newcommand{\ena}{\end{array}}
\newcommand{\bds}{\begin {description}}
\newcommand{\eds}{\end {description}}
\newcommand{\bdf}{\begin{definition}}
\newcommand{\blm}{\begin{lemma}}
\newcommand{\edf}{\end{definition}}
\newcommand{\elm}{\end{lemma}}
\newcommand{\bthm}{\begin{theorem}}
\newcommand{\ethm}{\end{theorem}}
\newcommand{\bprp}{\begin{prop}}
\newcommand{\eprp}{\end{prop}}
\newcommand{\bcl}{\begin{claim}}
\newcommand{\ecl}{\end{claim}}
\newcommand{\bcr}{\begin{coro}}
\newcommand{\ecr}{\end{coro}}
\newcommand{\bquest}{\begin{question}}
\newcommand{\equest}{\end{question}}
\begin{document}
\title{Eliminating  Interference in LOS Massive Multi-User MIMO with a Few Transceivers}
\author{Uri Erez\thanks{U. Erez is with Tel Aviv University, Tel Aviv,
Israel (email: uri@eng.tau.ac.il). The work of U. Erez was supported by by the ISF under Grant 1956/15.} and Amir Leshem\thanks{Amir Leshem is with Faculty of Engineering, Bar-Ilan University (leshema@biu.ac.il). The work of Amir Leshem was partially support by ISF grant 1644/18 and ISF-NRF grant  2277/16.}}
\maketitle



\begin{abstract}
Wireless cellular communication networks are bandwidth and interference limited. An important means to overcome these resource limitations is the use of multiple antennas. Base stations equipped with a very large (massive) number of antennas have been the focus of recent research.
A bottleneck in such systems is the limited number of transmit/receive chains. In this work, a line-of-sight (LOS) channel  model is considered. It is shown that for a given number of interferers, it suffices that the number of transmit/receive chains exceeds the number of desired users by one, assuming a sufficiently large antenna array.  From a theoretical point of view, this is the first result proving the near-optimal performance of antenna selection, even when the total number of signals (desired and interfering) is larger than the number of receive chains.
Specifically, a single additional chain suffices to reduce the interference to any desired level. We prove that using the proposed selection, a simple linear receiver/transmitter for the uplink/downlink provides near-optimal rates. In particular, in the downlink direction, there is no need for complicated dirty paper coding; each user can use an optimal code for a single user interference-free channel. In the uplink direction, there is almost no gain in implementing joint decoding.
The proposed approach is also a significant improvement both from system and computational perspectives. Simulation results demonstrating the performance of the proposed method are provided.
\end{abstract}

\section{Introduction}

It is well known that given an adaptive array with $N_r$ receive chains, one can null out $N_r-1$ (single-antenna) interferers and enjoy a full degree-of-freedom (DoF) for one desired source.
Such an architecture, often referred to as receive beamforming, is attractive due to its ease of implementation, as well as robustness.
Nonetheless, it leads  to low spectral utilization.



A line of work that has emerged in recent years, starting with  the seminal work of \cite{marzetta2010noncooperative}, is the use of massive multi-input multi-output (MIMO) technology as a promising practical means to boost the throughput of wireless networks. In a nutshell, assuming a rich scattering environment or alternatively that the number of antennas at the base station is much larger than the  total number of users, one can leverage channel hardening arguments to conclude that the channel vectors of the users will be approximately orthogonal. Under these conditions, beamforming will be near optimal on both uplink and downlink.

Nonetheless, massive MIMO technology faces some considerable challenges, including the overhead incurred by the need for transmission of many pilots, and the substantial cost of radio frequency (RF) chains.
Specifically, we consider the class of line-of-sight wireless channels
\cite{almers2007survey}
which are very relevant in recent applications of mm wave wireless communications, see e.g., \cite{rappaport2013millimeter} and references therein.
In the present paper, we demonstrate that one can attain the performance gains of massive MIMO technology while employing a minimal number of receive chains.

Our main result is as follows: For any number of interferers, given $r$ receive chains and a large enough number $N_r$ of antennas, one can approximately  null out all interferers while simultaneously affording (with probability one) a full DoF to $r-1$ single-antenna sources. This corresponds to a utilization of $1-\nicefrac{1}{r}$ of the $r$ DoFs available.
In a practical system implementation, employing a linear massive MIMO array (assuming planar geometry), this may be accomplished by judiciously selecting $r$ antennas out of a large array of $N_r$ antennas. The output of the selected antennas is then fed into $r$ receive chains as depicted in Figure~\ref{fig:selection}.
\begin{figure}[htp]
\begin{center}
\includegraphics[width=0.9\columnwidth]{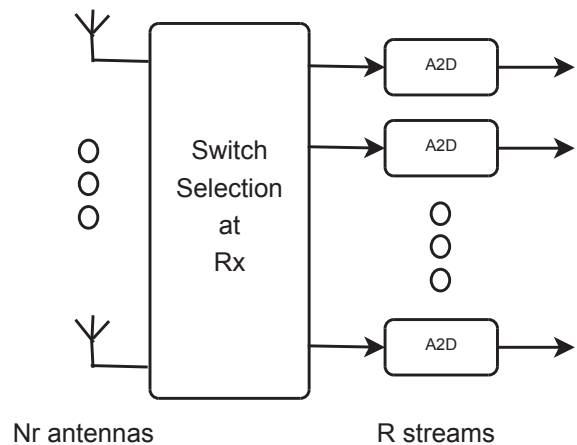}
\end{center}
\caption{Setting antenna spacing via selection.}
\label{fig:selection}
\end{figure}


\subsection{Related Work}
Massive multi-user MIMO (MU-MIMO) emerged from the realization that when the number of antenna elements is much larger than the number of users, the inner product between different users' channels becomes negligible with respect to the norm of each user's channel, leading to ``channel hardening''. Under such conditions,  linear processing in conjunction with single-user encoding/decoding (for the downlink/uplink respectively)  is near optimal \cite{marzetta2010noncooperative,rusek2012scaling,larsson2014massive}.

Digitally sampling a very large antenna array entails a  prohibitively large hardware complexity. Hence, substantial research efforts have been devoted to alleviating this burden. Among the proposed techniques are hybrid analog/digital beamforming, low resolution ADCs and antenna selection.
An overview of these various techniques can be found in, e.g.,
\cite{liang2014low,mollen2017uplink,mehanna2013joint}.

One may think of the selection mechanism as a means of altering the physical communication channel to attain some benefit.
The idea of of changing the physical propagation channel bears some relation to ``media-based modulation", ``spatial modulation" and ``index modulation" schemes; see \cite{basar2017index,ishikawa201850} for an overview of these inter-related concepts.
In all of these works, the physical medium is modulated based on the information-bearing signal.  In contrast, in the present work the physical medium is altered depending only on the location of the desired and interference sources, independent of the transmitted data.

\begin{figure*}
    \centering
    \includegraphics[width=0.85 \textwidth]{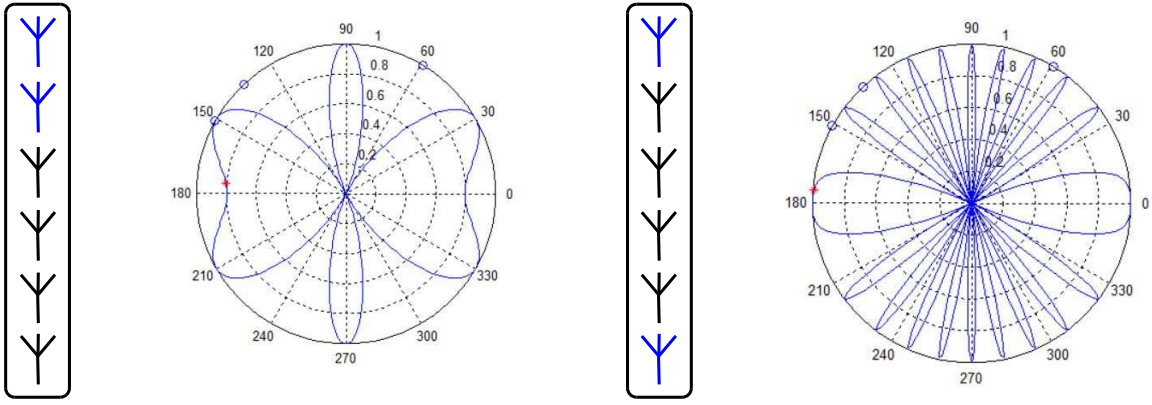} 
    \caption{Effect of selection on beam pattern.}
     \label{fig:ambiguous}
\end{figure*}




\section{System Model}

For simplicity, we consider the uplink portion of the communication link in a cellular MU-MIMO scenario.
We begin by defining the system configuration of a single link between a single (possibly MIMO) transmitter and the base station.

Denote by $t/N_t/N_r/r$  a transmit/receive MIMO configuration utilizing antenna selection.
Thus, the uplink of a massive MIMO cell with a total of $K$ single-antenna users will consist of $K$ links of the form $1/1/N_r/r$  as depicted in Figure~\ref{fig:selection}.
.

\section{Line-of-Sight Channel Model}

We consider the LOS channel model. This channel model is important in its own right, e.g., for millimeter-wave communication ~\cite{rappaport2015wideband}. Moreover, this channel model can serve as  the basis for a more elaborate, specular multipath MU-MIMO channel model.

Denoting  the channel vector  from transmitter $k$ to the $N_r$ antennas by $\vh_{k}\in \mathbb{C}^{N_r\times 1}$, the  received signal is given by
\begin{align}
\label{eq:MU_MIMO_model}
    \vy=\sum_{k=1}^K \mathbf{h}_{k} \mathbf{x}_k + \vz
\end{align}
where $\mathbf{z}$ is i.i.d.  circularly-symmetric complex Gaussian noise with variance $\gs^2$. The receiver decodes the messages of users $1\,\ldots,r-1$ whereas the signals $x_k$, $k=r,\ldots,K$
are interference.

We note that the $1/1/N_r/r$ MU-MIMO interference channel is equivalent to requiring that  the receiver must employ a linear front-end selection matrix $\mS_R \in \{0,1\}^{N_r \times N_r} $,  having exactly $r$ non-zero elements which are not in the same row or column. Applying selection matrix $\mS_{R}$   \eqref{eq:MU_MIMO_model} becomes
\begin{align}
\label{eq:IC_model2c}
    \mathbf{y}&=\mS_{R}^H \sum_{k=1}^K  \mathbf{h}_{k} \mathbf{x}_k + \mS_{R}^H \vz.
\end{align}
From a practical perspective, implementing the selection mechanism yields a substantial reduction in hardware complexity.

We consider the  LOS  channels. Specifically, we make the following assumptions:
\begin{enumerate}[label=A\arabic*]
  \item For simplicity we 
  assume planar geometry where all sources are far field point sources.
  \item The vectors $\vh_k$  consist of array manifold vectors
  \begin{align}
\vh(\theta_k)=\left[h_1(\theta_k),...,h_{N_r}(\theta_k)\right]^T \in {\mathbb C^{N_r \times 1}},
\label{def:h}
\end{align}
where
\begin{align}
h_n(\theta_k)= e^{2 \pi j \cdot n \cos(\theta_k)}, n=1,\ldots,N_r.
   \label{def:h2}
\end{align}
\item For sake of analysis, the dimensions of the array are taken to be unbounded similarly to \cite{marzetta2010noncooperative}.
  \item We  assume that the receiver has perfect CSI w.r.t. all channel gains corresponding to impinging signals. Transmitters on the other hand need not have access to any CSI beyond the rate at which they should communicate.
  \item Without loss of generality, we use the array manifold as the channel, since the signal attenuation can be absorbed in the power of $x_k$.
  \item \label{independence} For simulations we assume that the  locations  of the sources are independently  uniformly distributed in angle with respect to the receiver.
  \item We assume that the transmit power of all transmitters is bounded by $P$.
\end{enumerate}
Under these assumptions for the $1/1/N_r/r$ MU-MIMO channel equation~\eqref{eq:IC_model2c} reduces to
\begin{align}
\label{eq:IC_model2_cellular}
    \mathbf{y}=\mS_R^H \sum_{k=1}^K \vh(\theta_k) \mathbf{x}_k + \vz.
\end{align}

In the next section we prove that there is a selection such that with a proper linear processing  we only receive the desired $r-1$ signals, and attenuate the interfering signals to any prescribed level, with negligible noise amplification.

\section{Ergodic Beamforming}
Selection methods for reducing the complexity of massive MU-MIMO systems is a widely explored field \cite{mehanna2013joint, gershman2005space, sadek2007active, gorokhov2003receive}. There are many works related to antenna selection in LOS channels, see e.g., \cite{gao2013antenna, dong2011adaptive}. The optimal selection of antennas is very complicated. Therefore, many innovative techniques for antenna selection  have been proposed \cite{mehanna2013joint, gao2013antenna, gorokhov2003receive, gharavi2004fast}. Our contribution is establishing fundamental limits for antenna selection in LOS channels.  Namely, it is shown that, any number of interferers can be suppressed with the number of receive chains exceeding the number of desired users by one.

Classical signal processing literature is focused on Nyquist beamformers, where at least some antennas are separated by at most $\gl/2$. In this case, the array has a single main lobe at the desired direction, and the resolution of the array is determined  by the farthermost elements. The reason for this is that when all distances between antennas are larger than   $\gl/2$, an ambiguous beam pattern occurs. An example of this phenomenon is depicted in Figure~\ref{fig:ambiguous}.
Interestingly, ambiguous arrays can prove extremely advantageous when dealing with interference since such arrays have multiple nulls. We will show that by judiciously designing the beam pattern, we can simultaneously point multiple nulls at the multiple interferers.


Using a beamforming vector $\vw$, the received signal becomes:
\begin{equation}
\label{eq:rx_bf}
    \vxt=\vw^H \left(\sum_{j=1}^K \vh(\gth_k)  \mathbf{x}_k + \vz \right).
\end{equation}
The support of $\vw_k, k=1,\ldots,K$ implicitly defines the selection matrix $\mS_R$.
For each desired user $1 \le k \le r-1$ define the beamformer $\vw_{n_k}$ by
  \begin{align}
     \vw_{n_k}=\frac{1}{\sqrt{2}}\left(\ve_0+\ve_{n_i}\right)
     \label{def:vw}
 \end{align}
 where $\ve_n\in {\mathbb C}^{N_r}\times 1$ are the standard unit vectors.
Denote
\begin{align}
     g(\gth;\vw_{n})&=\left|\vw^H_{n} \vh(\theta) \right|^2 \\
     &=1+\cos(2\pi n\cos(\theta))
\label{def:g}
\end{align}
where $\vh(\theta)$ is defined in \eqref{def:h2}.
 We now show that  by selecting a common reference antenna, we can choose one antenna ($n_k$ ) per desired source ($1 \le k \le r-1 $ ) such that the beamformer
  satisfies:
\begin{equation}
\label{eq:design_eq}
 g(\gth_j;\vw_{n_k}) \approx \left\{
\begin{array}{cc}
 1    &  k=j \\
0     &  {\rm otherwise}
\end{array} \right.
\end{equation}
where without loss of generality the desired users are $k=1,...,r-1$.

\begin{theorem}[Receiver for a desired direction \cite{leshem2019ergodic}]
\label{thm:main}
Assume that the directions $\gth_1,..,\gth_K$ are such that $\cos(\gth_1),...,\cos(\gth_K)$ are independent over $\mathbb Q$. Then for every $i$ and every $\gd>0$, one can find $n_i \in \mathbb N$ such that beamforming with the vector $\vw_{n_i}$  yields:

\beq
 \label{eq:array_gain}
\begin{array}{lcll}
g(\gth_k;\vw_{n_i})&<& \gd,  \quad \quad k \neq i\\
g(\gth_i;\vw_{n_i})&>&1-\gd.
\end{array}
\eeq
\end{theorem}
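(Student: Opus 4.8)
The plan is to exploit the equidistribution / simultaneous approximation properties that follow from the rational independence hypothesis on $\cos(\gth_1),\ldots,\cos(\gth_K)$. Recall from \eqref{def:g} that $g(\gth;\vw_n) = 1 + \cos(2\pi n\cos(\gth))$, so the two conditions in \eqref{eq:array_gain} translate into the requirement that, for a suitable integer $n_i$, the quantity $n_i\cos(\gth_i)$ is within $O(\gd)$ of an integer (so that $\cos(2\pi n_i\cos(\gth_i))$ is close to $1$), while simultaneously each $n_i\cos(\gth_k)$ for $k\neq i$ is within $O(\gd)$ of a half-integer (so that $\cos(2\pi n_i\cos(\gth_k))$ is close to $-1$). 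Equivalently, writing $x_k=\cos(\gth_k)$, I want $n_i x_i$ close to $0 \pmod 1$ and $n_i x_k$ close to $\tfrac12 \pmod 1$ for all $k\neq i$.

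First I would reduce to a statement about the orbit of a single point under an irrational rotation, or rather about the closure of the set $\{(n x_1,\ldots,n x_K) \bmod 1 : n\in\mathbb N\}$ in the torus $\mathbb T^K$. The key tool is the classical theorem on uniform distribution of $\{n\alpha\}$ for vectors $\alpha$: if $1,x_1,\ldots,x_K$ were rationally independent, Weyl's theorem would immediately give that $\{(nx_1,\ldots,nx_K)\}_{n}$ is dense (indeed equidistributed) in $\mathbb T^K$, and I could pick $n_i$ with the $i$-th coordinate near $0$ and all others near $\tfrac12$. The hypothesis as stated is only that $x_1,\ldots,x_K$ are independent over $\mathbb Q$, which does not include $1$; so the orbit closure is a subtorus (a coset of a closed subgroup of $\mathbb T^K$) of dimension at least... — here one must be a little careful. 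The cleanest route is: the closure of $\{(nx_1,\ldots,nx_K)\}$ is a closed subgroup $H\le\mathbb T^K$, and by the structure theorem its connected component is a subtorus whose dimension equals $K$ minus the rank of the group of integer relations $\{(m_1,\ldots,m_K,m_0)\in\mathbb Z^{K+1} : \sum m_k x_k + m_0 \in\mathbb Z\}$ modulo... Actually the simplest and fully sufficient fact is just that for \emph{each fixed} $i$, by a one-variable argument combined with a Kronecker-type density argument, one can hit the target box; I would present it by invoking Kronecker's theorem directly: the closure of $\{(nx_1,\ldots,nx_K) \bmod 1\}$ contains a coset of the subtorus on which the coordinates satisfy exactly the $\mathbb Q$-linear relations among the $x_k$, and since the $x_k$ are $\mathbb Q$-independent there are no such relations except those forced by the missing "$1$", which only constrains the orbit to a translate, not its projections. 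Hence every target point of the form $(\,*,\ldots,*,\tfrac12,\ldots,\tfrac12)$ with the $0$ in slot $i$ — shifted by the appropriate constant — is approximated to within any $\gre>0$.

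Concretely, the steps in order are: (1) restate \eqref{eq:array_gain} as "find $n_i$ with $\|n_i x_i\| < \gre$ and $\|n_i x_k - \tfrac12\| < \gre$ for all $k\neq i$", where $\|\cdot\|$ is distance to the nearest integer and $\gre$ is a function of $\gd$ made explicit via the Lipschitz bound $|1+\cos(2\pi t) - 1| \le 2\pi\|t\|$ and $|1+\cos(2\pi(t+\tfrac12))| \le 2\pi^2\|t\|^2$ — wait, I should just use $1+\cos(2\pi s)\le \tfrac{(2\pi)^2}{2}\|s-\tfrac12\|^2$ near half-integers, which gives a clean bound; (2) invoke Kronecker/Weyl to get density of $\{(nx_1,\ldots,nx_K) \bmod 1\}$ in the relevant subtorus, using the rational independence hypothesis to rule out obstructions; (3) check that the target point $(0,\tfrac12,\ldots,\tfrac12)$ (permuted) lies in the closure — this is where one verifies that no rational relation among the $x_k$ excludes it, which is exactly the given hypothesis; (4) choose $\gre$ small enough as a function of $\gd$ from step (1), extract $n_i$, and conclude.

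The main obstacle I anticipate is step (3): making precise exactly which subtorus the orbit fills and confirming that the specific target vector with a $0$ in one slot and $\tfrac12$ in all others is attainable. If one only knows $x_1,\ldots,x_K$ are $\mathbb Q$-independent but allows a nontrivial relation involving the constant $1$ (e.g. $x_1+\cdots+x_K\in\mathbb Q$), the orbit closure is a proper coset and one must verify the target still lies in it; the safe resolution is to either strengthen the hypothesis to "$1,x_1,\ldots,x_K$ independent over $\mathbb Q$" (which is the generic case and surely what is intended, cf. assumption \ref{independence} that the angles are i.i.d.\ continuous, so almost surely no rational relations hold) or to carry through the coset bookkeeping. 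Since the cited source \cite{leshem2019ergodic} presumably does exactly this, I would follow that treatment and, in the writeup here, simply note that under the a.s.-generic position of the sources the full torus $\mathbb T^K$ is filled and Kronecker's theorem finishes the argument.
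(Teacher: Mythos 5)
Your approach --- reducing \eqref{eq:array_gain} to placing $n_i\cos\gth_k$ near half-integers for $k \neq i$ while keeping $n_i\cos\gth_i$ away from them, and invoking Weyl/Kronecker equidistribution of $(n\cos\gth_1,\ldots,n\cos\gth_K) \bmod 1$ --- is exactly the route the paper indicates: it gives no proof of this theorem itself, stating only that the result rests on the uniform distribution of sequences modulo $1$ and deferring the details to \cite{leshem2019ergodic}. Your flagged concern about whether $1$ must be adjoined to the rationally independent set (as full-torus equidistribution requires) is a legitimate subtlety not addressed in the paper, though note that the condition $g(\gth_i;\vw_{n_i})>1-\gd$ only demands $\cos(2\pi n_i\cos\gth_i)>-\gd$ (since $g$ ranges up to $2$), which leaves considerably more slack than your stated target of driving $n_i\cos\gth_i$ to an integer.
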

The proof is based on the uniform distribution property of sequences modulo $1$ \cite{weyl1916gleichverteilung}. The proof can be found in  \cite{leshem2019ergodic}.

A simple corollary of Theorem~\ref{thm:main} is that one can simultaneously receive $r-1$ data streams using $r$ receive chains, while suppressing any number of external interferers.
\begin{coro}[Receiver for $r-1$ desired directions]
\label{thm:main_corollary}
Assume that we have $r-1$ desired users where $r$ is the number of receive chains over a LOS channel as defined in \eqref{eq:IC_model2_cellular}.
Then, for any rate satisfying:
\begin{equation}
R_k \le \log \left(1+\frac{P}{\gs^2} \right)
\end{equation}
there is a sufficiently large array with $N_r$ antennas and a selection matrix $\mS_{R}$, such that for all users $k=1,\ldots,r-1$, error-free transmission is achievable.

\end{coro}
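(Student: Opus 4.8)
The plan is to build the selection matrix from $r-1$ separate invocations of Theorem~\ref{thm:main}, one per desired user, sharing a single common reference antenna. I would work on the probability-one event (under the i.i.d.\ uniform model of assumption~A6) on which $\cos(\gth_1),\ldots,\cos(\gth_K)$ are rationally independent, so that Theorem~\ref{thm:main} applies; fix an accuracy parameter $\gd>0$ to be pinned down at the end. For each desired user $i\in\{1,\ldots,r-1\}$, Theorem~\ref{thm:main} furnishes an index $n_i\in\mathbb{N}$ for which $\vw_{n_i}=\tfrac{1}{\sqrt2}(\ve_0+\ve_{n_i})$ satisfies \eqref{eq:array_gain}. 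The key point is that the bound $g(\gth_k;\vw_{n_i})<\gd$ there holds for \emph{all} $k\neq i$ at once --- both the other $r-2$ desired directions and all $K-r+1$ interfering directions --- so the external interferers need no separate treatment.

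Next I would make the construction consistent. The proof of Theorem~\ref{thm:main} is an equidistribution argument, so the set of admissible indices for each user is not merely nonempty but of positive density in $\mathbb{N}$; hence the $n_i$ can be taken pairwise distinct and distinct from the reference index $0$, and $\mS_R$ then selects exactly the $r$ antennas $\{0,n_1,\ldots,n_{r-1}\}$. Choosing $N_r>\max_i n_i$ (permissible by assumption~A3) embeds this selection in an $N_r$-element array. Since $\|\vw_{n_k}\|=1$, there is no noise enhancement: by \eqref{eq:rx_bf} the beamformer output of user $k$ is $\vxt_k=\vw_{n_k}^H\vh(\gth_k)\,x_k+\sum_{j\neq k}\vw_{n_k}^H\vh(\gth_j)\,x_j+\vw_{n_k}^H\vz$, whose last term is circularly-symmetric Gaussian with variance $\gs^2$. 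Using \eqref{eq:array_gain} together with the power bound A7 ($\rE|x_j|^2\le P$ for every $j$) and letting each desired user transmit at full power $P$, the signal-to-interference-plus-noise ratio obeys
\begin{align}
\SINR_k \;=\; \frac{g(\gth_k;\vw_{n_k})\,P}{\gs^2+\sum_{j\neq k}g(\gth_j;\vw_{n_k})\,\rE|x_j|^2} \;\ge\; \frac{(1-\gd)\,P}{\gs^2+(K-1)\gd\,P},
\end{align}
and the right-hand side increases to $P/\gs^2$ as $\gd\to0$.

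Finally I would convert SINRs into rates. Each desired user uses an i.i.d.\ Gaussian point-to-point code of power $P$ at rate $R_k$, and its receiver treats $\sum_{j\neq k}\vw_{n_k}^H\vh(\gth_j)\,x_j+\vw_{n_k}^H\vz$ as additive noise. This term is independent of $x_k$ and has average power at most $\gs^2+(K-1)\gd P$; even though the interferers' signals are arbitrary, the worst-case-noise property of the Gaussian channel (equivalently, a compound / arbitrarily-varying-channel argument) guarantees that rate $\log(1+\SINR_k)$ is achievable with vanishing error probability. Given a target tuple with $R_k\le\log(1+P/\gs^2)$, I then pick $\gd$ small enough --- equivalently $N_r$ large enough --- that $\log\!\left(1+\tfrac{(1-\gd)P}{\gs^2+(K-1)\gd P}\right)\ge R_k$ for all $k$ (rate tuples on the boundary of the region being approached as $N_r\to\infty$), which delivers error-free transmission for all $r-1$ users.

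The steps requiring real care beyond bookkeeping on top of Theorem~\ref{thm:main} are the adversarial nature of the interfering signals --- handled by the worst-case-noise/compound-channel bound above --- and guaranteeing that the $r$ selected antenna indices are distinct, which rests on the positive density of good indices that the equidistribution proof of Theorem~\ref{thm:main} actually yields. I expect the former to be the main conceptual obstacle, the remainder being routine.
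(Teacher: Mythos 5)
Your proposal is correct and follows essentially the same route as the paper, whose proof is a single sentence: fix a common reference antenna, apply Theorem~\ref{thm:main} once per desired direction $\gth_i$, and select the corresponding second antenna with beamformer $\vw_{n_i}$. The additional details you supply --- the rational-independence event, the distinctness of the selected indices via the positive density of admissible $n_i$, the explicit SINR bound $\frac{(1-\gd)P}{\gs^2+(K-1)\gd P}$, and the worst-case-noise argument for treating non-Gaussian interference as noise --- are all gaps the paper leaves implicit, and you fill them in correctly.
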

\begin{proof}
The claim follows by assigning one of the antennas as a reference for interference cancellation and applying Theorem~\ref{thm:main}  to every desired direction $\gth_i$ and  choosing the second antenna with the appropriate spacing and a beamforming $\vw_{n_i}$.
\end{proof}
Denote
\beq
\label{def:W}
\mW=\left[\vw_{n_1},...,\vw_{n_{r-1}} \right]
\eeq
be an $r \times (r-1)$ beamforming matrix as constructed in Corollary \ref{thm:main_corollary}.
The following theorem characterizes the achievable sum rate for the LOS massive MU-MIMO channel.
\begin{theorem}[Main Theorem]
Choosing $\mW$ as defined in \eqref{def:W} for almost all channel realizations the achievable sum-rate satisfies
\begin{equation}
R_{\rm sum}\left(\mW\right) \ge (r-1)\log\left(1+\frac{P}{\gs^2}\right)-\gre
\end{equation}
 for every $\gre >0$.
\label{main_theorem}
\end{theorem}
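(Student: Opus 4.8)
The plan is to reduce the multi-user sum-rate claim to a clean single-user statement by showing that the beamforming matrix $\mW$ almost diagonalizes the effective channel, so that the $r-1$ streams can be treated as essentially parallel interference-free links. Concretely, applying $\mW^H$ to \eqref{eq:IC_model2_cellular} yields the effective received vector $\vxt = \mW^H \sum_{k=1}^K \vh(\gth_k)\mathbf{x}_k + \mW^H\vz$. Write the effective channel matrix $\mathbf{G}$ with entries $\mathbf{G}_{i,k} = \vw_{n_i}^H \vh(\gth_k)$ for $i=1,\ldots,r-1$ and $k=1,\ldots,K$. By Corollary~\ref{thm:main_corollary} (hence by Theorem~\ref{thm:main}), for any $\gd>0$ we may choose the antenna indices $n_1,\ldots,n_{r-1}$ and a large enough $N_r$ so that $|\mathbf{G}_{i,k}|^2 = g(\gth_k;\vw_{n_i})$ is within $\gd$ of $\mathbf{1}\{i=k\}$ for all the desired indices $k\le r-1$ and is below $\gd$ for all interference indices $k\ge r$. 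The hypothesis that $\cos(\gth_1),\ldots,\cos(\gth_K)$ are independent over $\mathbb{Q}$ holds for almost all channel realizations under assumption~\ref{independence}, which is what licenses the phrase ``for almost all channel realizations.''

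Next I would lower-bound the achievable sum rate. Using $\mW$ as a linear front end and then (if desired) a further linear equalizer, a standard achievability bound gives $R_{\rm sum}(\mW) \ge \log\det\bigl(\mathbf{I} + \tfrac{P}{\gs^2}\,\mathbf{G}_d \mathbf{G}_d^H (\,\gs^2 \mathbf{Q}_n + P\,\mathbf{G}_I\mathbf{G}_I^H)^{-1}\bigr)$-type expression, where $\mathbf{G}_d$ is the $(r-1)\times(r-1)$ desired block, $\mathbf{G}_I$ collects the interference columns $k=r,\ldots,K$, and $\mathbf{Q}_n = \mW^H\mW$ is the (bounded, close to identity) noise covariance. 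Alternatively, and more simply, I would just decode each stream $i$ treating the other $r-2$ desired streams and all interferers as noise: the signal term has power $P\,g(\gth_i;\vw_{n_i}) \ge P(1-\gd)$, while the total interference-plus-noise power at output $i$ is at most $\gs^2 \|\vw_{n_i}\|^2 + P\sum_{k\ne i} g(\gth_k;\vw_{n_i}) \le \gs^2 + P(K-1)\gd$. This yields $R_i \ge \log\bigl(1 + \frac{P(1-\gd)}{\gs^2 + P(K-1)\gd}\bigr)$, and summing over the $r-1$ users gives $R_{\rm sum}(\mW) \ge (r-1)\log\bigl(1 + \frac{P(1-\gd)}{\gs^2 + P(K-1)\gd}\bigr)$.

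It remains to do the $\gre$-bookkeeping: the right-hand side above converges to $(r-1)\log(1+P/\gs^2)$ as $\gd\to 0$, with $K$ and the directions fixed, by continuity of $\log$. So given $\gre>0$, pick $\gd$ small enough that the loss is below $\gre$, then invoke Corollary~\ref{thm:main_corollary} to obtain the corresponding $N_r$, $n_i$, and $\mS_R$. This establishes the bound $R_{\rm sum}(\mW) \ge (r-1)\log(1+P/\gs^2) - \gre$.

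The only genuinely delicate point is the order of quantifiers: $\gd$ (hence the needed array size $N_r$) is allowed to depend on $\gre$, on $K$, and on the particular direction tuple $(\gth_1,\ldots,\gth_K)$, which is fine for an ``almost all realizations'' statement but means there is no uniform array size working for all configurations. I would also make sure the $\|\vw_{n_i}\|^2 = 1$ normalization in \eqref{def:vw} is used consistently so that the noise term is exactly $\gs^2$ and does not blow up — this is the ``negligible noise amplification'' claimed in the introduction, and it is immediate here precisely because each $\vw_{n_i}$ is a unit vector supported on two antennas. Everything else is routine: continuity of the rate expression and a finite union over the $r-1$ users.
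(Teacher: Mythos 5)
Your proposal is correct and rests on the same two pillars as the paper's proof --- Theorem~\ref{thm:main} applied once per desired user to make the effective channel nearly diagonal and the interference columns nearly zero, followed by continuity of $\log$ as $\gd \to 0$ --- but your rate-bounding step takes a genuinely different route. The paper forms the full effective Gram matrix $\mQ = \mHt\mHt^H$ with $\mHt = \mW^H\mH$ containing all $K$ columns (desired \emph{and} interfering), writes $\mQ = \mI + \mE$ with $\mE$ small, sandwiches the eigenvalues of $\mQ$ via Gershgorin, and bounds $\log\left|\mI + \frac{P}{\gs^2}\mQ\right|$ between $R_{\min}$ and $R_{\max}$; this is a joint log-det argument. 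You instead decode each of the $r-1$ streams separately, treating the other desired streams and all interferers as noise, and bound the per-stream SINR directly. Your route buys two things the paper's argument glosses over: (i) it keeps the interference where it belongs, in the denominator of the SINR, whereas the paper's $\mQ$ absorbs the interference columns into the signal covariance, so that $\log\left|\mI + \frac{P}{\gs^2}\mQ\right|$ is, strictly speaking, the mutual information of a channel in which all $K$ signals are jointly decoded rather than the sum rate of the $r-1$ desired users in the presence of interference; and (ii) it sidesteps the noise coloring induced by the shared reference antenna (the Gram matrix $\mW^H\mW$ has off-diagonal entries equal to $1/2$, so the post-beamforming noise across streams is correlated), which the paper's white-noise log-det expression silently ignores but which is irrelevant to your per-stream bound since each $\vw_{n_i}$ is a unit vector. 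In exchange, the paper's version also yields the matching upper bound $R_{\max}$. Your closing remark on the order of quantifiers --- that $\gd$, and hence the required $N_r$, depends on $\gre$, $K$ and the particular realization of the directions, so no uniform array size exists --- is exactly the right caveat and is left implicit in the paper.
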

\begin{proof}
let
\begin{equation}
\label{def:H}
\mH=\left[\vh((\gth_1,...\vh(\gth_K)\right]
\end{equation}
be the channel matrix including the desired and interfering signals.
Denote the effective channel matrix
\begin{equation}
    \mHt=\mW^H \mH.
\end{equation}
Let $\mQ=\mHt \mHt^H$.
By the choice of $\mW$ we obtain that
\begin{equation}
  \mQ_{i,j}=\mI+\mE
\end{equation}
where $\left|\mE_{i,j}\right| \le K \gd$, for $\gd<\frac{1}{2K}$.

Using Gershgorin's theorem \cite{horn1990matrix} we obtain that
\begin{equation}
1-K\gd\le \gm_i(\mQ)\le 1+K\gd.
\end{equation}
where $\gm_i(\mQ)$ are the eigenvalues of $\mQ$.
Hence
\begin{equation}
 R_{\min}  \le    \log \left|\mI+\frac{P}{\gs^2}\mQ  \right|\le R_{\max}
\end{equation}
where
\begin{eqnarray}
R_{\min} = (r-1)\log\left(1+\frac{P}{\gs^2}(1-K\gd) \right) \\
 R_{\max}=(r-1)\log\left(1+\frac{r}{r-1}\frac{P}{ \gs^2}\right)
\end{eqnarray}
for all $0<\gd<\frac{1}{2K}$. The second inequality follows from the MIMO capacity in the absence of external interference.
Choosing $\gd$ sufficiently small and by continuity of $\log(t)$ the claim follows.
\end{proof}

\section{Algorithms and implementation}
Theorem \ref{main_theorem}  guarantees that interference can be suppressed to any desired level. However, it does not exploit the full optimization parameter space. Ultimately, our goal is to maximize the signal to interference plus noise ratio by properly choosing the antennas and the  beamformers corresponding to each source .
The straightforward approach would be to enumerate over all subsets of $r$ antennas and evaluating the SINR, for the optimal linear receiver. The complexity of this algorithm is prohibitive and simpler algorithms are required.  For instance, we can select a single antenna as a reference antenna and choose one antenna per desired user.

 \subsection{Pairwise antenna selection}
 The simplest approach is to use Theorem \ref{main_theorem} directly where for each use we choose the antenna which maximizes the SINR  for this user.
 To that end, define
 \begin{align*}
    \vw_i&=\arg \max_{\vw\in {\mathbb C}^{r}} \frac{P|\vw^H \vh(\gth_i)|^2}
    {\sum_{j \neq i} P_j|\vw^H \vh(\gth_j)|^2+\gs^2 \|\vw\|^2} \\
    {\rm subject\ to}:& \quad  \|\vw\|_H=2 \\
                     &  \quad  w_0 =1.
\end{align*}
 where $\|\cdot \|$ is the $\ell_0$ norm.

 Each signal is obtained by beamforming the output of the reference antenna and a second antenna which provides the interference free reception.
However, since we have $r$ receivers, it is reasonable to improve the performance by optimizing the receive vector for each signal jointly, over all the selected receive chains.
To that end let $n_0,...,n_{r-1}$ be the indices of the selected antennas and let $\mS_R$ be the corresponding selection matrix defined by $\left(n_0,...,n_{r-1}\right)$.
We can now maximize the SINR of each user by optimizing
\begin{align*}
    \vw_i&=\arg \max_{\vw\in {\mathbb C}^{N_r}} \frac{P|\vw^H \mS_R \vh(\gth_i)|^2}
    {\sum_{j \neq i} P_j|\vw^H \va(\gth_j)|^2+\gs^2 \|\vw\|^2}\\
     {\rm subject\ to}:& \quad  {\rm supp}(\vw_i)=\left(n_0,...,n_{r-1}\right),
\end{align*}
where ${\rm supp}$ defines the support.
Since the directions of interferers are assumed known, we can use the interference covariance based beamformer \cite{gu2012robust}
\begin{equation}
    \vw_i=\mR_n^{-1} \mS^H_R\vh(\gth_i)
\end{equation}
where ${\rm supp}(\vw_i)$ is the support of $\vw_i$.
\begin{equation}
  \mR_n=\sum_{j \neq i} P_j \mS_R^H \vh(\gth_j)\vh(\gth_j)^H\mS_R+\gs^2 \mI.
\end{equation}
As discussed in \cite{ehrenberg2010sensitivity}, there is significant benefit in terms of robustness when using the interference covariance as a basis for beamforming instead of the received signal covariance matrix.

\section{Simulations}
To test the proposed MU-MIMO scheme for a LOS channel, we assumed that we have two desired users and 3 undesired users which serve as out of cell interference. We used 100 antennas with $\gl/2$
spacing.We performed 500 experiments. In each experiment we randomly picked the direction of all 5 signals. We assumed three receive chains ($r=3$). The first array element served as a common reference and the other elements were optimized for each desired user, respectively.

\begin{figure}[t]
\begin{center}
\includegraphics[width=0.85\columnwidth]{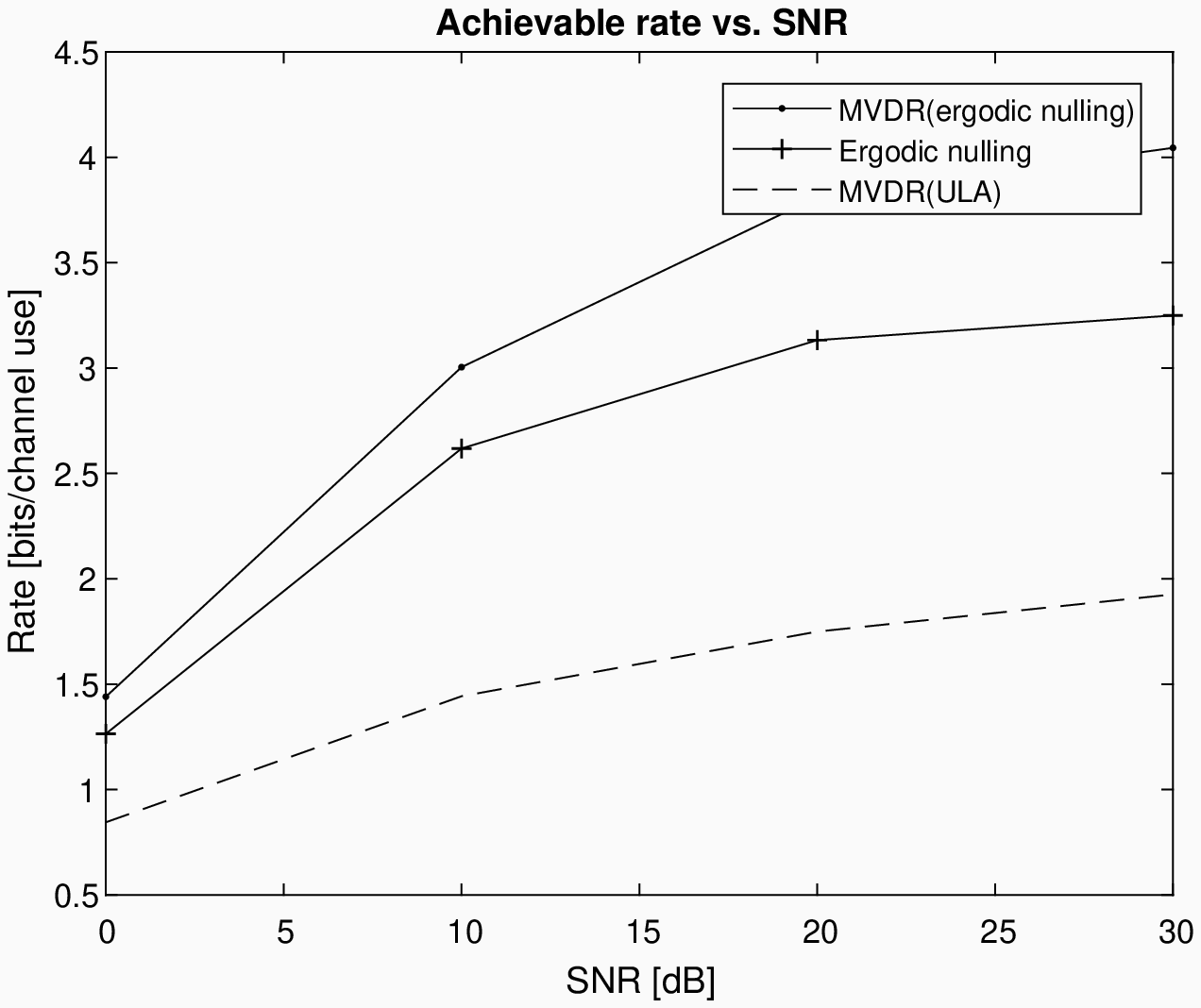}
\caption{$K=5$ users with $r-1=2$ desired users, $r=3$ receive chains. $N_r=100$.  LOS MU-MIMO channel. 500 random channels. SIR=-7dB. }
\end{center}
\label{fig:vsnr}
\end{figure}

\begin{figure}[t]
\begin{center}
\includegraphics[width=0.9\columnwidth]{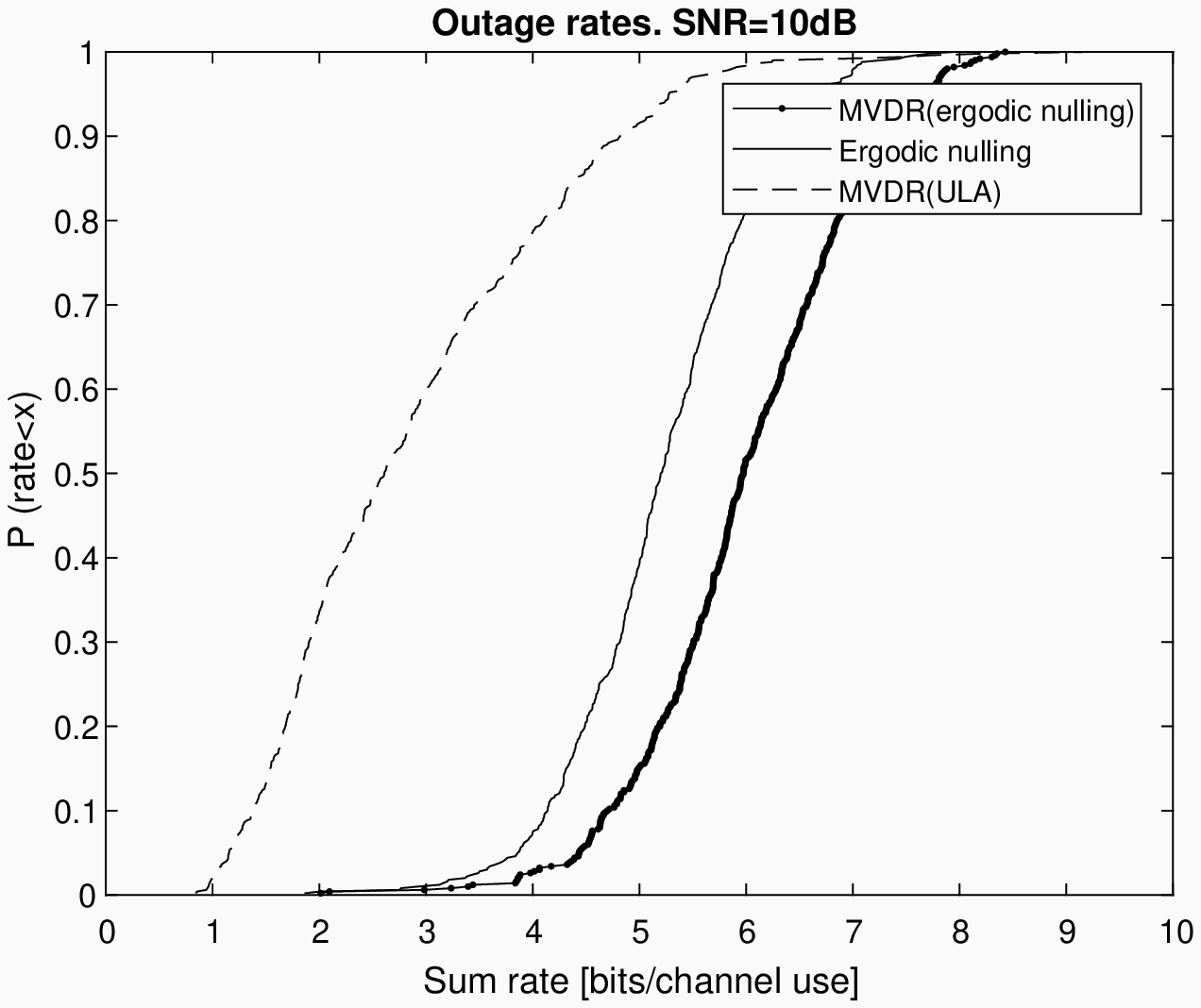}
\caption{$K=5$ users with $r-1=2$ desired users, $r=3$ receive chains. $N_r=100$.  LOS MU-MIMO channel. 500 random channels. SIR=-7dB.}
\end{center}
\label{fig:vd}
\end{figure}

\section{Extensions and Conclusions}

In this paper we proposed a novel technique for interference mitigation in LOS massive MIMO cellular networks employing a minimal number of receive chains. Simulation results demonstrate the effectiveness of the approach.

The paper focused on the uplink direction. However, by invoking a simple uplink-downlink duality argument, the same architecture applies for the downlink, assuming that directional CSI is available to the base-station. In LOS channel this assumption is mild, since practically all receiver have GPS receivers. Alternatively, the directions can be estimated using the well known techniques of \cite{sheinvald1999direction}, \cite{wax1997joint}.

\bibliographystyle{IEEEtran}
\bibliography{mybib1}

\begin{thebibliography}{10}
\providecommand{\url}[1]{#1}
\csname url@samestyle\endcsname
\providecommand{\newblock}{\relax}
\providecommand{\bibinfo}[2]{#2}
\providecommand{\BIBentrySTDinterwordspacing}{\spaceskip=0pt\relax}
\providecommand{\BIBentryALTinterwordstretchfactor}{4}
\providecommand{\BIBentryALTinterwordspacing}{\spaceskip=\fontdimen2\font plus
\BIBentryALTinterwordstretchfactor\fontdimen3\font minus
  \fontdimen4\font\relax}
\providecommand{\BIBforeignlanguage}[2]{{%
\expandafter\ifx\csname l@#1\endcsname\relax
\typeout{** WARNING: IEEEtran.bst: No hyphenation pattern has been}%
\typeout{** loaded for the language `#1'. Using the pattern for}%
\typeout{** the default language instead.}%
\else
\language=\csname l@#1\endcsname
\fi
#2}}
\providecommand{\BIBdecl}{\relax}
\BIBdecl

\bibitem{marzetta2010noncooperative}
T.~L. Marzetta, ``Noncooperative cellular wireless with unlimited numbers of
  base station antennas,'' \emph{IEEE Transactions on Wireless Communications},
  vol.~9, no.~11, pp. 3590--3600, 2010.

\bibitem{almers2007survey}
P.~Almers, E.~Bonek, A.~Burr, N.~Czink, M.~Debbah, V.~Degli-Esposti,
  H.~Hofstetter, P.~Ky{\"o}sti, D.~Laurenson, G.~Matz \emph{et~al.}, ``Survey
  of channel and radio propagation models for wireless {MIMO} systems,''
  \emph{EURASIP Journal on Wireless Communications and Networking}, vol. 2007,
  no.~1, p. 019070, 2007.

\bibitem{rappaport2013millimeter}
T.~S. Rappaport, S.~Sun, R.~Mayzus, H.~Zhao, Y.~Azar, K.~Wang, G.~N. Wong,
  J.~K. Schulz, M.~Samimi, and F.~Gutierrez~Jr, ``Millimeter wave mobile
  communications for 5g cellular: It will work!'' \emph{IEEE access}, vol.~1,
  no.~1, pp. 335--349, 2013.

\bibitem{rusek2012scaling}
F.~Rusek, D.~Persson, B.~K. Lau, E.~G. Larsson, T.~L. Marzetta, O.~Edfors, and
  F.~Tufvesson, ``Scaling up {MIMO}: Opportunities and challenges with very
  large arrays,'' \emph{arXiv preprint arXiv:1201.3210}, 2012.

\bibitem{larsson2014massive}
E.~G. Larsson, O.~Edfors, F.~Tufvesson, and T.~L. Marzetta, ``Massive {MIMO}
  for next generation wireless systems,'' \emph{IEEE Communications Magazine},
  vol.~52, no.~2, pp. 186--195, 2014.

\bibitem{liang2014low}
L.~Liang, W.~Xu, and X.~Dong, ``Low-complexity hybrid precoding in massive
  multiuser {MIMO} systems,'' \emph{IEEE Wireless Communications Letters},
  vol.~3, no.~6, pp. 653--656, 2014.

\bibitem{mollen2017uplink}
C.~Moll{\'e}n, J.~Choi, E.~G. Larsson, and R.~W. Heath, ``Uplink performance of
  wideband massive {MIMO} with one-bit adcs,'' \emph{IEEE Transactions on
  Wireless Communications}, vol.~16, no.~1, pp. 87--100, 2017.

\bibitem{mehanna2013joint}
O.~Mehanna, N.~D. Sidiropoulos, and G.~B. Giannakis, ``Joint multicast
  beamforming and antenna selection,'' \emph{IEEE Transactions on Signal
  Processing}, vol.~61, no.~10, pp. 2660--2674, 2013.

\bibitem{basar2017index}
E.~Basar, M.~Wen, R.~Mesleh, M.~Di~Renzo, Y.~Xiao, and H.~Haas, ``Index
  modulation techniques for next-generation wireless networks,'' \emph{IEEE
  Access}, vol.~5, pp. 16\,693--16\,746, 2017.

\bibitem{ishikawa201850}
N.~Ishikawa, S.~Sugiura, and L.~Hanzo, ``50 years of permutation, spatial and
  index modulation: From classic {RF} to visible light communications and data
  storage,'' \emph{IEEE Communications Surveys \& Tutorials}, 2018.

\bibitem{rappaport2015wideband}
T.~S. Rappaport, G.~R. MacCartney, M.~K. Samimi, and S.~Sun, ``Wideband
  millimeter-wave propagation measurements and channel models for future
  wireless communication system design,'' \emph{IEEE Transactions on
  Communications}, vol.~63, no.~9, pp. 3029--3056, 2015.

\bibitem{gershman2005space}
A.~Gershman and N.~Sidiropoulos, \emph{Space-time processing for {MIMO}
  communications}.\hskip 1em plus 0.5em minus 0.4em\relax John Wiley \& Sons,
  2005.

\bibitem{sadek2007active}
M.~Sadek, A.~Tarighat, and A.~H. Sayed, ``Active antenna selection in multiuser
  {MIMO} communications,'' \emph{IEEE Transactions on Signal Processing},
  vol.~55, no.~4, pp. 1498--1510, 2007.

\bibitem{gorokhov2003receive}
A.~Gorokhov, D.~A. Gore, and A.~J. Paulraj, ``Receive antenna selection for
  {MIMO} spatial multiplexing: theory and algorithms,'' \emph{IEEE Transactions
  on signal processing}, vol.~51, no.~11, pp. 2796--2807, 2003.

\bibitem{gao2013antenna}
X.~Gao, O.~Edfors, J.~Liu, and F.~Tufvesson, ``Antenna selection in measured
  massive {MIMO} channels using convex optimization,'' in \emph{2013 IEEE
  globecom workshops (GC Wkshps)}.\hskip 1em plus 0.5em minus 0.4em\relax IEEE,
  2013, pp. 129--134.

\bibitem{dong2011adaptive}
K.~Dong, N.~Prasad, X.~Wang, and S.~Zhu, ``Adaptive antenna selection and tx/rx
  beamforming for large-scale {MIMO} systems in 60 ghz channels,''
  \emph{EURASIP Journal on Wireless Communications and Networking}, vol. 2011,
  no.~1, p.~59, 2011.

\bibitem{gharavi2004fast}
M.~Gharavi-Alkhansari and A.~B. Gershman, ``Fast antenna subset selection in
  {MIMO} systems,'' \emph{IEEE transactions on signal processing}, vol.~52,
  no.~2, pp. 339--347, 2004.

\bibitem{leshem2019ergodic}
A.~Leshem and U.~Erez, ``Achieving interference free rates via ergodic
  nulling,'' arxiv. Submitted to IEEE Trans. on Information Theory.

\bibitem{weyl1916gleichverteilung}
H.~Weyl, ``{\"U}ber die {G}leichverteilung von {Z}ahlen mod. {E}ins,''
  \emph{Mathematische Annalen}, vol.~77, no.~3, pp. 313--352, 1916.

\bibitem{horn1990matrix}
R.~A. Horn, R.~A. Horn, and C.~R. Johnson, \emph{Matrix analysis}.\hskip 1em
  plus 0.5em minus 0.4em\relax Cambridge university press, 1990.

\bibitem{gu2012robust}
Y.~Gu and A.~Leshem, ``Robust adaptive beamforming based on interference
  covariance matrix reconstruction and steering vector estimation,'' \emph{IEEE
  Transactions on Signal Processing}, vol.~60, no.~7, pp. 3881--3885, 2012.

\bibitem{ehrenberg2010sensitivity}
L.~Ehrenberg, S.~Gannot, A.~Leshem, and E.~Zehavi, ``Sensitivity analysis of
  mvdr and mpdr beamformers,'' in \emph{Electrical and Electronics Engineers in
  Israel (IEEEI), 2010 IEEE 26th Convention of}.\hskip 1em plus 0.5em minus
  0.4em\relax IEEE, 2010, pp. 000\,416--000\,420.

\bibitem{sheinvald1999direction}
J.~Sheinvald and M.~Wax, ``Direction finding with fewer receivers via
  time-varying preprocessing,'' \emph{IEEE transactions on signal processing},
  vol.~47, no.~1, pp. 2--9, 1999.

\bibitem{wax1997joint}
M.~Wax and A.~Leshem, ``Joint estimation of time delays and directions of
  arrival of multiple reflections of a known signal,'' \emph{IEEE Transactions
  on Signal Processing}, vol.~45, no.~10, pp. 2477--2484, 1997.

\end{thebibliography}



\end{document}